\newcommand*{\tr}{^\mathrm{\scriptscriptstyle T}}
\newcommand{\E}{E}
\newcommand{\Cov}{\operatorname{cov}}
\newcommand{\diag}{\operatorname{diag}}
\newcommand{\vSigma}{\boldsymbol\Sigma}
\newcommand{\vare}{\boldsymbol\epsilon}
\newtheorem{theorem}{Theorem}
\newtheorem{remark}{Remark}
\newtheorem{lemma}{Lemma}
\begin{document}

\begin{frontmatter}

\title{Sequential estimation for GEE with adaptive variables and subject selection}

\runtitle{Sequential estimation for GEE with adaptive variables and subject selection}

\begin{aug}
\author{\fnms{Zimu} \snm{Chen}\thanksref{a,e1}\ead[label=e1,mark]{zmchen@mail.ustc.edu.cn}}
\author{\fnms{Zhanfeng} \snm{Wang}\thanksref{a,e2}\ead[label=e2,mark]{zfw@ustc.edu.cn}}
\and
\author{\fnms{Yuan-chin Ivan} \snm{Chang}\thanksref{b}%
\ead[label=e3]{ycchang@sinica.edu.tw}%
}

\address[a]{Department of Statistics and Finance, University of Science and Technology of China\\ Hefei 230026, China
\printead{e1,e2}}

\address[b]{Institute of Statistical Science, Academia Sinica\\ Taipei 11529, Taiwan
\printead{e3}
}

\runauthor{Z. Chen, Z. Wang \and YCI Chang}

\end{aug}

\begin{abstract}
Modeling correlated or highly stratified multiple-response data becomes a common data analysis task due to modern data monitoring facilities and methods. Generalized estimating equations (GEE) is one of the popular statistical methods for analyzing this kind of data.
In this paper, we present a sequential estimation procedure for obtaining GEE-based estimates. In addition to the conventional random sampling, the proposed method features adaptive subject recruiting and variable selection.  Moreover, we equip our method with an adaptive shrinkage property so that it can decide the effective variables during the estimation procedure and build a confidence set with a pre-specified precision for the corresponding parameters.  In addition to the statistical properties of the proposed procedure, we assess our method using both simulated data and real data sets.
\end{abstract}
\begin{keyword}
\kwd{Adaptive design}
\kwd{Adaptive sampling}
\kwd{Maximum quasi-likelihood estimate}
\kwd{Generalized estimating equations}
\kwd{Stopping time}
\end{keyword}



\end{frontmatter}

\section{Introduction}
Correlated or highly stratified response data are common in studies where subjects are observed at multiple time points~\citep{diggle02}, such as in some medical, epidemiological, and financial studies and other likewise longitudinal studies.
\cite{liang1986longitudinal} proposed the generalized estimating equations (GEE) method, as an extension of generalized linear models~\citep{wedderburn1974quasi}, for analyzing this kind of correlated responses.
Due to modern computational/informational techniques, it is now easier in those scenarios to collect a long list of variables with highly clustered observations.
However, from both a theoretical and computational perspective, statisticians are still hard-pressed to efficiently decide the effective variables when building a model using GEE methods.

In addition to variable determination issues, there are usually many redundant data points in a modern, automatically collected, disorganized data set, which becomes a data analysis burden, especially  when the number of such data points is large and only limited computational capacity is available.
Using a smaller, representative subset from a large data set would be an easy way to overcome such a computational obstacle.  Nevertheless, to achieve this, we need to address new challenges, such as subset size and how to obtain a representative subset from such a data set through a statistically legitimate way. This situation becomes more complicated when there is a lengthy list of variables in the data set.
Thus, it is important to have an efficient way to simultaneously determine important variables and detect the informative data points for analyzing this kind of data.
Moreover, model interpretation would suffer if a large number of variables are included in the model. Thus, how to decide the effective variables to use in a model is essential for improving our ability to interpret it.

To balance the statistical inference needs and computational cost due to data set size, our goal is to use a sufficient amount of data to build a statistically interpretable model with effective selected variables.
  Sequential methods can help to identify and select the informative data points from a data pool for our analysis goal without using all of them at a time. This motivated us to study sequential methods with both adaptive variable and sample selection features for GEE-based estimation problems in highly stratified multiple response data.

Sequential methods are commonly used when there is no fixed sample size solution\citep[see][]{siegmund1985sequential}.
The idea of sequential methods emerged in the 1940s \citep{wald1945} and many researchers apply this concept, under different statistical setups, to many kinds of applications, such as clinical trials in medical studies, quality control in industry applications, computerized educational testing, and so on \citep{ lai2001, chang-lu2010, bartroff2013, wainer2014, tartakovsk2015, parkchang2016}.

Conventional sequential methods analyze data when they become available so that we can end an experiment as soon as we have a satisfactory  result with the desired statistical properties.  We adopt the idea of sequential analysis, but recruit data from an existing, large data pool instead of collecting new observations, which differs from conventional sequential analysis.
For a given precision in the parameter estimates, we sequentially find the most informative data points from an existing data set, for a GEE procedure, and  simultaneously select a high-impact subset of variables during the estimation process. This kind of method allows us to retrench the sample size used without diminishing study quality.
Taking advantage of the ability of sequential analysis to deal with adaptive/random subject recruiting and using a statistical experimental design  criterion,
we adaptively recruit new subjects, into the analysis to diminish the computational obstacles caused by large sample sizes.  This adaptive feature of our procedure makes it closely related to stochastic regression, sequential experimental design, and nonlinear optimal design \citep{lai-wei82, Wu1985efficient, Fedorve-Leonov2014}..

In the rest of this paper, we first briefly review the method of obtaining a maximum quasi-likelihood estimate (MQLE) and then present a sequential estimation procedure based on it for correlated multiple response data. This procedure features adaptive data selection and impact variable detection and builds a fixed-size confidence set for these effective variables when the data recruiting procedure is stopped.  In Section 3, we use both simulated data and real examples to illustrate the proposed methods for various models and under different estimation strategies.  This is then followed by the Conclusion section.  We present technical details and supplemental numerical results in the Appendix and Supplementary materials.

\section{Methodology}

\subsection{Model}
Let $(y_{ij}, \mathbf{x}_{ij})$ denote a pair representing the $j$th measurement on the $i$th subject, $j=1,\ldots, m_i$ and $i=1,\ldots, n$, where $y_{ij}$ is a scalar response and $\mathbf{x}_{ij}$ is a $p \times 1$ covariate vector.
Observations  from the same subject are assumed to be correlated; otherwise, they are independent.
{Let $\mathbf{y}_i=(y_{i1}, \ldots, y_{im_i})\tr$, $i=1, \ldots, n_i$, be the vector of responses for the $i$th cluster and $\mathbf{X}_{i}=(\mathbf{x}_{i1}, \ldots, \mathbf{x}_{im_i})\tr$ be the associated $m_i\times p$ matrix of covariates.  We assume $m_i=m<\infty$ for simplicity. }
{Let $\mathbf{h}_i(\beta)=\mathbf{h}(\mathbf{X}_i\beta)=\E\{\mathbf{y}_i\mid\mathbf{X}_i,\beta\}$} be the conditional expectation of
$\mathbf{y}_i$ given $\mathbf{X}_i$ and $\beta$, where $\mathbf{h(\cdot)}$ is an $m$-dimensional sufficiently smooth one-to-one real-valued function and $\beta$ is the unknown parameter vector to be estimated. \citet{wedderburn1974quasi} proposed an estimate  for $\beta$, {\color{black} denoted by $\tilde{\beta}_n$,} through solving the following equations:
{\color{black}
\begin{equation}\label{eq:MQLE}
  \mathbf{S}_n(\beta)=\sum_{i=1}^n\mathbf{X}_i\tr\mathbf{A}_i(\beta)\mathbf{V}_i^{-1}(\beta)\{\mathbf{y}_i-\mathbf{h}_i(\beta)\bigr\}=0,
\end{equation}
where $\mathbf{A}_i(\beta) = \mathbf{A}(\mathbf{X}_i\beta)=\partial \mathbf{h}\tr(t)/\partial t|_{t=\mathbf{X}_i\tr\beta}$ and $\mathbf{V}_i(\beta)=\mathbf{V}(\mathbf{X}_i\tr\beta)$ is a suitably-chosen known function of $\Cov\{\mathbf{y}_i|\mathbf{X}_i, \beta\}$. $\tilde{\beta}_n$ is commonly  called the MQLE.}
 If we specify the covariance of $\mathbf{y}_i$ as $\mathbf{V}_i(\beta, \alpha)=\mathbf{A}_i^{1/2}(\beta)\mathbf{R}_i(\alpha)\mathbf{A}_i^{1/2}(\beta)$  {\color{black}through a working matrix} $\mathbf{R}_i(\alpha)$, 
 then \eqref{eq:MQLE} is a generalized estimating equation (GEE), as the one in \citet{liang1986longitudinal}, with equal-sized clusters.   Note that the $\tilde{\beta}_n$ derived  from  \eqref{eq:MQLE} requires no distributional assumption.

 Let $\vare_i(\beta)=\mathbf{y}_i-\mathbf{h}_i(\beta)$.  Then, following notations similar to those used in \cite{xie2003} and~\cite{Balan2005}, let
\[\mathbf{H}_n(\beta)= \sum_{i=1}^n \mathbf{X}_i\tr\mathbf{A}_i(\beta)\mathbf{V}_i^{-1}(\beta)\mathbf{A}_i\tr(\beta)\mathbf{X}_i.\]
  Then,  replacing the true (unknown)  correlation matrix {of the response variable} 
  by $\vare_i(\beta)\vare_i\tr(\beta)$ in the covariance matrix of $\mathbf{S}_n(\beta)$, we have
\[ \mathbf{M}_n(\beta)= \sum_{i=1}^n \mathbf{X}_i\tr\mathbf{A}_i(\beta)\mathbf{V}_i^{-1}(\beta)\vare_i(\beta)\vare_i\tr(\beta)\mathbf{V}_i^{-1}(\beta)\mathbf{A}_i\tr(\beta)\mathbf{X}_i.\]
Let $\beta_0 \in R^p$ be the unknown true parameter vector to be estimated.
(To simplify  our notations, we will suppress $\beta_0$ (or $\widehat{\beta}_n$)  in the sequel when it is clear that a term is a function of $\beta$ and evaluated at $\beta_0$ (or $\widehat{\beta}_n$),  that is, $\mathbf{H}_n = \mathbf{H}_n(\beta_0)$ and $\mathbf{M}_n = \mathbf{M}_n(\beta_0)$ or, likewise, $\widehat{\mathbf{H}}_n = \widehat{\mathbf{H}}_n(\widehat{\beta}_n)$ and $\widehat{\mathbf{M}}_n = \widehat{\mathbf{M}}_n(\widehat{\beta}_n)$.)

\subsection{Sequential method for MQLE}
Using the notations defined before, let $\mathcal{F}_{n} = \sigma\{(\mathbf{y}_j, \mathbf{X}_j) : j = 1, \ldots ,n\}$ for $n \geq 1$ and
let $\mathcal{F}_{0} = \sigma\{\emptyset, \Omega\}$;  then, $\{ \mathcal{F}_{n}:  n \geq 0 \}$ is an increasing sequence of $\sigma$-fields.
Assume that for each $i$, the observed data satisfy {\color{black}$\E\{\mathbf{y}_i \mid \mathcal{F}_{i-1}\} = \mathbf{h}_i(\beta_0)$}.
Define $\mathbf{e}_i = \mathbf{y}_i - \E\{\mathbf{y}_i \mid \mathcal{F}_{i-1}\}$ for each $i \geq 1$;  then, $\{\mathbf{e}_n\}$ is a martingale difference sequence with respect to the $\sigma$-fields $\mathcal{F}_{n}$.
{Following \cite{lai-wei82}, we also call Equation~\eqref{eq:MQLE} a stochastic quasi-likelihood estimating equation.}

{
If there are only a small number $p_0$ out of the $p$ components of $\beta_0$ that truly have an impact on the response and we know which they are in advance, then fitting a model with these $p_0$ variables, without using all $p$ variables, will certainly be more efficient -- requiring fewer observations and less computational time.
In practice, we usually have no such information, nor the number $p_0$,  and fitting a model with all of the variables would increase the computational cost and instability of the model.
The situation will be worse when $p$ is much larger than $p_0$. (Throughout the rest of this paper, we will refer to the $p_0$ variables as `effective variables’.)}   The conventional sequential estimation procedure, with no ability to detect the effective variables, cannot work well in this situation, which motivates us to incorporate the idea of adaptive shrinkage estimate (ASE) \citep{wang2013sequential} to the current procedure.

{\color{black}
There is a great deal of discussion about the asymptotic properties of the estimate for stochastic linear regression. Here, without linearity assumptions, we report some asymptotic results for MQLEs under conditions similar to those used in \citet{Lai1979StrongCon, lai-wei82} for adaptive covariate/design vectors:}
\renewcommand\labelenumi{(C\arabic{enumi})}
\renewcommand\theenumi\labelenumi
\begin{enumerate}
\item For any $t \in R^q$, $\mathbf{V}(t) > 0$, $\det \mathbf{A}(t) \neq 0$, each element of $\mathbf{V}(t)$ is continuously differentiable, and {\color{black}$\mathbf{h}(t)$} is twice continuously differentiable.
\item The matrix $ \mathbf{X}_i$, $i \geq 1$, satisfies $\sup_{i\geq1} \|\mathbf{X}_i\| < \infty$ almost surely, {\color{black}where $\|\cdot\|$ stands for the Euclidean norm}.  In addition, the maximum and minimum eigenvalues of $\sum_{i=1}^{n}\mathbf{X}_i\mathbf{X}_i\tr$,  denoted  respectively as $\bar{\lambda}_n$ and $\underline{\lambda}_n$,  satisfy $\underline{\lambda}_n \rightarrow \infty$ and
${\color{black}\lim\inf_{n\rightarrow\infty} \underline{\lambda}_n/\{(\bar{\lambda}_n\log\bar{\lambda}_n)^{\frac 1 2} (\log\log\bar{\lambda}_n)^{\frac 1 2 + \alpha}\} > 0}$ almost surely for some  $\alpha > 0$.
\item The martingale difference sequence $\{\mathbf e_n, n\geq 1\}$ with respect to an increasing sequence of $\sigma$-fields $\{\mathcal{F}_{n}, n\geq 1\}$ satisfies
\[\sup_{i\geq1} \E\{\|\mathbf{e}_i\|^2\mid\mathcal{F}_{i-1}\} < \infty \ \text{almost surely}.\]
\item $\Cov\{\mathbf{e}_i\mid\mathcal{F}_{i-1}\} > c{I}_q$ for all $i\geq1$, and $\sup_{i\geq1} \E\{\|\mathbf{e}_i\|^r\mid\mathcal{F}_{i-1}\} < \infty$ almost surely for some $r > 2$, {\color{black}where $c$ is a positive constant independent of $n$} and $I_q$ is a $q\times q$ identity matrix.
\item There exists a non-random positive definite symmetric matrix $\tilde{\mathbf{M}}_n$, for which \\
$\tilde{\mathbf{M}}_n^{-1/2}\mathbf{M}_n\tilde{\mathbf{M}}_n^{-1/2}\rightarrow {I}_p$ in probability,
where
$\tilde{\mathbf{M}}_n^{-1/2}$ is the positive definite symmetric square root of $\tilde{\mathbf{M}}_n$.
\item There exists a continuously increasing function $\rho(\cdot)$
and a non-random positive definite symmetric matrix $\vSigma$ such that
\begin{align}\label{eq:cond.asynorm}
   \mathbf{M}_n^{-1/2}\mathbf{H}_n/{\rho(n)}^{1/2}\rightarrow \vSigma^{1/2} \text{ almost surely.}
\end{align}
\end{enumerate}

Assume that conditions (C1)--(C5) are satisfied. Then, we have
\begin{lemma}\label{mqle:ucip}
The sequence of random variable $\bigl\{\widehat{\mathbf{M}}_{n}^{-1/2}\widehat{\mathbf{H}}_{n}(\widehat \beta _n-\beta_0),n=1,2,\ldots\bigr\}$ is uniformly continuous in probability {\color{black} (u.c.i.p.)}.
\end{lemma}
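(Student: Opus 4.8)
The plan is to verify the definition of uniform continuity in probability directly: for every $\epsilon>0$ we must exhibit $\delta>0$ and an integer $n_0$ so that $\sup_{n\ge n_0}P\bigl(\max_{n\le k\le(1+\delta)n}\|\mathbf Z_k-\mathbf Z_n\|>\epsilon\bigr)<\epsilon$, where $\mathbf Z_n=\widehat{\mathbf{M}}_n^{-1/2}\widehat{\mathbf{H}}_n(\widehat\beta_n-\beta_0)$. The first step is to reduce $\mathbf Z_n$ to the normalized score. Since $\widehat\beta_k$ solves $\mathbf S_k(\beta)=0$ and, by the assumption $\E\{\mathbf y_i\mid\mathcal F_{i-1}\}=\mathbf h_i(\beta_0)$, we have $\vare_i(\beta_0)=\mathbf e_i$, so that $\mathbf S_k(\beta_0)=\sum_{i=1}^{k}\mathbf X_i\tr\mathbf A_i\mathbf V_i^{-1}\mathbf e_i$ is an $\{\mathcal F_k\}$-martingale. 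A Taylor expansion of $\mathbf S_k$ about $\beta_0$, in which the Jacobian of $\mathbf S_k$ differs from $-\widehat{\mathbf H}_k$ only through terms that are linear in the mean-zero residuals, gives $\widehat{\mathbf H}_k(\widehat\beta_k-\beta_0)=\mathbf S_k(\beta_0)+\mathbf r_k$. Using (C1) (boundedness and Lipschitz continuity of $\mathbf A(\cdot)$, $\mathbf V^{-1}(\cdot)$, $\mathbf h(\cdot)$ and the relevant derivatives on a neighbourhood of the a.s.\ bounded sequence $\{\mathbf X_i\beta_0\}$, bounded by (C2)), the strong consistency and convergence rate of $\widehat\beta_k$ available under (C1)--(C5), and the GEE/MQLE asymptotics in the spirit of \citet{xie2003,Balan2005}, one obtains $\widehat{\mathbf M}_k^{-1/2}\mathbf r_k\to\mathbf 0$ and $(\widehat{\mathbf M}_k^{-1/2}-\mathbf M_k^{-1/2})\mathbf S_k(\beta_0)\to\mathbf 0$ almost surely. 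A sequence converging to $\mathbf 0$ a.s.\ is automatically u.c.i.p., because its oscillation over $[n,(1+\delta)n]$ is dominated by twice its supremum over $\{k\ge n\}$, which tends to $0$; hence it suffices to prove the lemma for $\mathbf T_n:=\mathbf M_n^{-1/2}\mathbf S_n(\beta_0)$.

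For the martingale term, decompose, for $n\le k\le(1+\delta)n$,
\[
\mathbf T_k-\mathbf T_n=\mathbf M_n^{-1/2}(\mathbf S_k-\mathbf S_n)+\bigl(\mathbf M_k^{-1/2}-\mathbf M_n^{-1/2}\bigr)(\mathbf S_k-\mathbf S_n)+\bigl(\mathbf M_k^{-1/2}\mathbf M_n^{1/2}-I_p\bigr)\mathbf T_n .
\]
The last two terms are dominated by $\|\mathbf M_k^{-1/2}\mathbf M_n^{1/2}-I_p\|$ times, respectively, $\|\mathbf M_n^{-1/2}(\mathbf S_k-\mathbf S_n)\|$ and $\|\mathbf T_n\|$, and $\|\mathbf T_n\|=O_p(1)$ since it is bounded in $L^2$ under (C3)--(C4). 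So only two quantities must be controlled: $\max_{n\le k\le(1+\delta)n}\|\mathbf M_n^{-1/2}(\mathbf S_k-\mathbf S_n)\|$ and $\max_{n\le k\le(1+\delta)n}\|\mathbf M_k^{-1/2}\mathbf M_n^{1/2}-I_p\|$. For the first, $k\mapsto\mathbf S_k-\mathbf S_n$ is an $\{\mathcal F_k\}_{k\ge n}$-martingale whose predictable quadratic variation has increments $\mathbf X_i\tr\mathbf A_i\mathbf V_i^{-1}\Cov\{\mathbf e_i\mid\mathcal F_{i-1}\}\mathbf V_i^{-1}\mathbf A_i\tr\mathbf X_i$, which by (C3)--(C4) and a strong law are comparable to the increments of $\mathbf M_k$; Doob's $L^2$ (or a Hájek--Rényi) maximal inequality then bounds $P\bigl(\max_{n\le k\le(1+\delta)n}\|\mathbf M_n^{-1/2}(\mathbf S_k-\mathbf S_n)\|>\epsilon/3\bigr)$ by a constant multiple of $\epsilon^{-2}\,\E\,\mathrm{tr}\bigl[\mathbf M_n^{-1/2}\E\{\mathbf M_{(1+\delta)n}-\mathbf M_n\mid\mathcal F_n\}\mathbf M_n^{-1/2}\bigr]$. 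Thus both quantities are tamed once the normalizer is \emph{stable over the window}, i.e.\ $\E\{\mathbf M_{(1+\delta)n}-\mathbf M_n\mid\mathcal F_n\}\preceq C\delta\,\mathbf M_n$ and
\[
\lim_{\delta\downarrow0}\ \limsup_{n\to\infty}\ P\Bigl(\max_{n\le k\le(1+\delta)n}\bigl\|\mathbf M_k^{-1/2}\mathbf M_n^{1/2}-I_p\bigr\|>\eta\Bigr)=0\qquad\text{for every }\eta>0 .
\]
Granting these, the first probability is $\le Cp\,\delta/\epsilon^2+o(1)$, and the remaining pieces are handled by first choosing $M$ with $P(\|\mathbf T_n\|>M)<\epsilon/3$ for all large $n$ and then $\delta$ small; selecting $\delta$ and $n_0$ accordingly yields the required $<\epsilon$.

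The main obstacle is exactly this window-stability of $\mathbf M_n$. The intended route is: by (C1)--(C2) every increment $\mathbf M_i-\mathbf M_{i-1}$ is bounded in norm (the $\mathbf X_i$ are a.s.\ bounded, $\mathbf A$ and $\mathbf V^{-1}$ are continuous, and $\E\{\|\mathbf e_i\|^2\mid\mathcal F_{i-1}\}$ is bounded), so $\|\mathbf M_{(1+\delta)n}-\mathbf M_n\|=O(\delta n)$; by (C5), $\mathbf M_n$ is sandwiched by the non-random $\tilde{\mathbf M}_n$, and one invokes the regular growth of $\tilde{\mathbf M}_n$ — $\tilde{\mathbf M}_{(1+\delta)n}\tilde{\mathbf M}_n^{-1}\to I_p$ as $\delta\downarrow0$, uniformly in $n$ — together with $\underline{\lambda}_n\to\infty$ from (C2), which keeps $\mathbf M_n^{-1}$ from magnifying the $O(\delta n)$ increment. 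The delicate point is transferring the deterministic regularity of $\tilde{\mathbf M}_n$ across the in-probability convergence in (C5) while the right endpoint $(1+\delta)n$ of the window itself drifts with $n$; once it is settled, the martingale maximal inequality and the bookkeeping above are routine, as is the reduction from $\widehat{\mathbf M}_n$ to $\mathbf M_n$ via (C1) and the consistency of $\widehat\beta_n$. This parallels the argument used for the sequential ASE procedure in \citet{wang2013sequential}.
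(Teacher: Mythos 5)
Your overall skeleton matches the paper's: a Taylor expansion of the estimating equation reduces the problem to the normalized score $\mathbf{M}_n^{-1/2}\mathbf{S}_n(\beta_0)$, and a H\'{a}jek--R\'{e}nyi/Doob maximal inequality controls the martingale oscillation over the window $[n,(1+\delta)n]$. The difference --- and the genuine gap --- lies in how the normalizer is handled. Your three-term decomposition of $\mathbf{T}_k-\mathbf{T}_n$ forces you to compare $\mathbf{M}_k^{-1/2}$ with $\mathbf{M}_n^{-1/2}$ for every $k$ in the window, and you correctly identify the resulting window-stability requirements ($\E\{\mathbf{M}_{(1+\delta)n}-\mathbf{M}_n\mid\mathcal{F}_n\}\preceq C\delta\,\mathbf{M}_n$ and $\max_{n\le k\le(1+\delta)n}\|\mathbf{M}_k^{-1/2}\mathbf{M}_n^{1/2}-I_p\|\rightarrow 0$) as the crux --- but you leave them unproved, and they do not follow from (C1)--(C5) as you have set things up. The raw increment bound is only $\|\mathbf{M}_{(1+\delta)n}-\mathbf{M}_n\|=O(\delta n)$ in norm, while (C2) does not force $\lambda_{\min}(\mathbf{M}_n)$ to grow at rate $n$ (it requires only $\underline{\lambda}_n\rightarrow\infty$ plus a ratio condition on $\underline{\lambda}_n$ and $\bar{\lambda}_n$), so $C\delta n I_p\preceq C'\delta\,\mathbf{M}_n$ can fail; and (C5) gives in-probability convergence at a single index, not uniformly over a window whose right endpoint drifts with $n$.

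The paper's proof avoids this obstruction by never letting the index of the random normalizer move. It bounds the conditional second moment of each martingale-difference summand $\mathbf{X}_i\mathbf{A}_i\mathbf{V}_i^{-1}\mathbf{e}_i$ by a uniform constant $c$ (a.s.\ boundedness of $\mathbf{X}_i$ from (C2), eigenvalue bounds on $\mathbf{A}_i$ and $\mathbf{V}_i^{-1}$ from Lemma 1 of \citet{yin2006asymptotic}, and (C3)), so the H\'{a}jek--R\'{e}nyi inequality with the \emph{non-random} norming $\sqrt{n}$ yields $\mathrm{pr}\{\max_{k\le n\delta}|\mathbf{S}_{n+k}-\mathbf{S}_n|\ge\epsilon\sqrt{n}\}\le c\delta/\epsilon^2$ directly; the only window-dependence is then through $\sqrt{(1+\delta)n}/\sqrt{n}\rightarrow 1$. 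The random matrices enter only at the single index $n$, as the bounded factor $\mathbf{M}_n^{-1/2}\mathbf{H}_n^{1/2}\{\mathbf{H}_n^{-1/2}\mathcal{D}_n(\bar\beta_n)\mathbf{H}_n^{-1/2}\}^{-1}\{\mathbf{H}_n^{-1/2}\mathbf{M}_n^{1/2}\}^{-1}$, which converges a.s.\ (using $\mathbf{H}_n^{-1/2}\mathcal{D}_n(\bar\beta_n)\mathbf{H}_n^{-1/2}\rightarrow I_p$ from \citet{yin2006asymptotic}) and is absorbed by Lemma 1.4 of \citet{Woodroofe1982Nonlinear}. If you replace your decomposition with this fixed-index factorization and the per-increment conditional variance bound, the rest of your argument closes.
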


\subsection{Adaptive shrinkage estimate}

{
Under Conditions (C1)--(C3), \citet{yin2006asymptotic, Yin2008strongrate} showed that the convergence rate of the MQLE $\tilde{\beta}_n$ is:
\begin{align}\label{eq:con-rate}
\|\tilde\beta_n - \beta_0\| =
o(\{(\overline\lambda_{n}\log \overline\lambda_{n})^{1/2}(\log\log \overline\lambda_{n})^{1/2+\alpha}\}/\underline\lambda_{n}).
\end{align}
Hence, if we let $L_r = \{(\overline\lambda_{n}\log \overline\lambda_{n})^{1/2}(\log\log \overline\lambda_{n})^{1/2+\alpha}\}/\underline\lambda_{n}$, then, as $n\rightarrow \infty$, with probability one,
{${L_r}^{1/2}\kappa|\tilde\beta_{nj}|^{-\gamma}\longrightarrow 0\times I(\beta_{0j}\neq 0)+\infty\times
  I(\beta_{0j}=0)$,
where $I(\cdot)$ is the indicator function (presuming $\infty\times 0=0$) and \(\kappa=\kappa(n)\) is a non-random function of \(n\) such that
for $ 0<\delta<\frac{1}{2} $ and $\gamma>0$,
\begin{equation}\label{eq:kappa.rate}
{L_r}^{1/2}\kappa\rightarrow 0\quad \text{and} \quad
{L_r}^{1/2+\gamma\delta}\kappa\rightarrow\infty,\quad \text{as} \quad n\rightarrow\infty.
\end{equation}
(Note that~\eqref{eq:kappa.rate} can be easily satisfied; for example, we can take $\kappa= {L_r}^{-\theta}$ with $\theta\in(\frac{1}{2},\frac{1}{2}+\gamma\delta)$.)
Let $I_{nj}(\epsilon)=I\{{L_r}^{1/2} \kappa|\tilde\beta_{nj}|^{-\gamma}<\epsilon\}$  for each $j$; then, by \eqref{eq:con-rate}, we can use $I_{nj}(\epsilon)$ to detect whether the absolute difference between the {$j$th} component of $\beta_{0}$ and zero is asymptotically significantly larger than a predetermined constant $\epsilon ( >0)$.
Let $I_n(\epsilon)=\diag\{I_{n1}(\epsilon),\ldots, I_{np}(\epsilon)\}$; then,  $\widehat\beta_n \equiv I_n(\epsilon)\tilde\beta_n$ defines an ASE of $\beta_0$.
Thus, we have the following theorem:
\begin{theorem}\label{thm:ucip.rho}
Let  $N(t)$ be a positive integer-valued random variable for which $N(t)/t$ converges to $1$ in probability as $t\rightarrow\infty$.
{Assume that conditions in Lemma~\ref{mqle:ucip} and (C6) are satisfied},
then for any small $\epsilon > 0$, as $n\rightarrow\infty$ we have
 $\sqrt{\rho(N(t))}(\widehat\beta_{N(t)}-\beta_0)\rightarrow N(0,I_0\vSigma^{-1}I_0)$ in distribution,  where $I_0=\diag\bigl\{I(\beta_{01}\neq 0), \ldots, I(\beta_{0p}\neq 0)\bigr\}$ is a $p\times p$ diagonal matrix.
\end{theorem}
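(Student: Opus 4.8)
The plan is to combine four ingredients: the fixed-sample asymptotic normality of the MQLE; the uniform-continuity-in-probability (u.c.i.p.) property supplied by Lemma~\ref{mqle:ucip}, which upgrades a deterministic-index central limit theorem to a randomly-indexed one; condition (C6), which converts the $\mathbf{M}_n,\mathbf{H}_n$-normalization into the scalar rate $\rho(n)$; and the almost-sure stabilization of the shrinkage matrix $I_n(\epsilon)$ onto $I_0$. Using $I_0\beta_0=\beta_0$, I would throughout write $\widehat\beta_n-\beta_0=I_n(\epsilon)(\tilde\beta_n-\beta_0)+\{I_n(\epsilon)-I_0\}\beta_0$ and handle the two terms separately.

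First I would pin down the shrinkage matrix. By the rate \eqref{eq:con-rate} together with \eqref{eq:kappa.rate}, for each coordinate $j$ one has $L_r^{1/2}\kappa|\tilde\beta_{nj}|^{-\gamma}\to 0$ a.s.\ when $\beta_{0j}\neq 0$ and $\to\infty$ a.s.\ when $\beta_{0j}=0$, so $I_{nj}(\epsilon)\to I(\beta_{0j}\neq 0)$ a.s.\ and hence $I_n(\epsilon)\to I_0$ almost surely. Moreover, for every $j$ with $\beta_{0j}\neq 0$ the indicator $I_{nj}(\epsilon)$ equals $1$ for all large $n$ (a.s.), while coordinates with $\beta_{0j}=0$ contribute nothing to $\{I_n(\epsilon)-I_0\}\beta_0$; thus $\rho(n)^{1/2}\{I_n(\epsilon)-I_0\}\beta_0$ is the zero vector for all large $n$, a.s., and a fortiori tends to $0$. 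Since $N(t)/t\to 1$ in probability forces $N(t)\to\infty$ in probability, the standard fact that an a.s.\ convergent sequence evaluated at an index tending to infinity still converges in probability gives $I_{N(t)}(\epsilon)\to I_0$ and $\rho(N(t))^{1/2}\{I_{N(t)}(\epsilon)-I_0\}\beta_0\to 0$ in probability.

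Next I would treat $\sqrt{\rho(N(t))}(\tilde\beta_{N(t)}-\beta_0)$. The martingale structure of $\mathbf{S}_n(\beta_0)$ under (C3)--(C5), plus a Taylor expansion of \eqref{eq:MQLE} about $\beta_0$ using (C1), gives the fixed-sample statement $\mathbf{M}_n^{-1/2}\mathbf{H}_n(\tilde\beta_n-\beta_0)\to N(0,I_p)$ in distribution (the same ingredient underlying Lemma~\ref{mqle:ucip}). Since that sequence is u.c.i.p.\ by Lemma~\ref{mqle:ucip} and $N(t)/t\to 1$ in probability, Anscombe's random-change-of-time theorem yields $\mathbf{M}_{N(t)}^{-1/2}\mathbf{H}_{N(t)}(\tilde\beta_{N(t)}-\beta_0)\to N(0,I_p)$ in distribution. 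By (C6), $\rho(n)^{-1/2}\mathbf{M}_n^{-1/2}\mathbf{H}_n\to\vSigma^{1/2}$ a.s.; evaluating at $N(t)$ and inverting (using $N(t)\to\infty$ in probability and invertibility of $\vSigma^{1/2}$) gives $\rho(N(t))^{1/2}\{\mathbf{M}_{N(t)}^{-1/2}\mathbf{H}_{N(t)}\}^{-1}\to\vSigma^{-1/2}$ in probability. Multiplying and applying Slutsky,
\[
\sqrt{\rho(N(t))}\,(\tilde\beta_{N(t)}-\beta_0)=\bigl[\rho(N(t))^{1/2}\{\mathbf{M}_{N(t)}^{-1/2}\mathbf{H}_{N(t)}\}^{-1}\bigr]\bigl[\mathbf{M}_{N(t)}^{-1/2}\mathbf{H}_{N(t)}(\tilde\beta_{N(t)}-\beta_0)\bigr]\longrightarrow\vSigma^{-1/2}N(0,I_p)=N(0,\vSigma^{-1}).
\]

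Finally I would assemble. Combining the previous two paragraphs through Slutsky,
\[
\sqrt{\rho(N(t))}\,(\widehat\beta_{N(t)}-\beta_0)=I_{N(t)}(\epsilon)\sqrt{\rho(N(t))}\,(\tilde\beta_{N(t)}-\beta_0)+\sqrt{\rho(N(t))}\,\{I_{N(t)}(\epsilon)-I_0\}\beta_0\longrightarrow I_0\,N(0,\vSigma^{-1})=N(0,I_0\vSigma^{-1}I_0)
\]
in distribution, which is the claim ($I_0$ being diagonal, hence symmetric). I expect the main obstacle to be the Anscombe step: passing from the deterministic index to $N(t)$ relies essentially on the u.c.i.p.\ property in exactly the form of Lemma~\ref{mqle:ucip}, and it requires some care that the ``hatted'' normalizers $\widehat{\mathbf{M}}_n,\widehat{\mathbf{H}}_n$ in that lemma may be interchanged with $\mathbf{M}_n,\mathbf{H}_n$ without disturbing any limit, which leans on (C1) and the strong consistency of $\tilde\beta_n$. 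The shrinkage bookkeeping and the remaining Slutsky manipulations are routine once $I_n(\epsilon)\to I_0$ a.s.\ is in hand.
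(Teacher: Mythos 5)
Your proposal is correct and follows essentially the same route as the paper's proof: the same decomposition $\widehat\beta_n-\beta_0=I_n(\epsilon)(\tilde\beta_n-\beta_0)+\{I_n(\epsilon)-I_0\}\beta_0$, the almost-sure convergence $I_n(\epsilon)\to I_0$, the fixed-sample normality of $\mathbf{M}_n^{-1/2}\mathbf{H}_n(\tilde\beta_n-\beta_0)$, condition (C6) with Slutsky, and Anscombe's theorem via the u.c.i.p.\ property of Lemma~\ref{mqle:ucip}. The only (immaterial) difference is that you apply Anscombe to $\mathbf{M}_n^{-1/2}\mathbf{H}_n(\tilde\beta_n-\beta_0)$ and then handle the normalizer at the random index, whereas the paper first shows $\sqrt{\rho(n)}(\tilde\beta_n-\beta_0)$ is itself u.c.i.p.\ and applies Anscombe to that sequence directly.
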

}

For convenience, we rearrange the matrix according to the effective variables detected at the current stage.
Let $O_n$ be an orthonormal matrix satisfying $O_n\tr O_n=I_p$ so that $(\widehat{\beta}_{n1}\tr ,\widehat{\beta}_{n2}\tr )\tr =O_n\widehat\beta_n$, where $(\widehat{\beta}_{n1}\tr ,\widehat{\beta}_{n2}\tr )\tr $ are the order-rearranged components of $\widehat\beta_n$ in which the indicators ${I_{nj}}$ corresponding to $\widehat\beta_{n1}$ are equal to 1, and the remaining ones, corresponding to $\widehat\beta_{n2}$, are equal to 0.
We partition the matrix $(O_n\widehat{\mathbf{H}}_n\widehat{\mathbf{M}}_n^{-1/2})(O_n\widehat{\mathbf{H}}_n\widehat{\mathbf{M}}_n^{-1/2})\tr $ according to the first $\hat p_0$ nonzero components of $O_n\widehat\beta_n$ so that
{\color{black}
\begin{equation}\label{eq.1}
(O_n\widehat{\mathbf{H}}_n\widehat{\mathbf{M}}_n^{-1/2})(O_n\widehat{\mathbf{H}}_n\widehat{\mathbf{M}}_n^{-1/2})\tr =\left(
\begin{array}{cc}
{\Sigma_{11}(n)}_{\hat p_0\times\hat p_0} & {\Sigma_{12}(n)}_{\hat p_0\times(p_0-\hat p_0)} \\
{\Sigma_{21}(n)}_{(p-\hat p_0)\times\hat p_0}&{\Sigma_{22}(n)}_{(p-\hat p_0)\times (p-\hat p_0)}\\
\end{array}\right).
\end{equation}
With simple matrix algebra, we have that
\begin{align}\label{sigma-1}
&\;O_nI_n(\epsilon)\bigl((\widehat{\mathbf{H}}_n\widehat{\mathbf{M}}_n^{-1/2})(\widehat{\mathbf{H}}_n\widehat{\mathbf{M}}_n^{-1/2})\tr \bigr)^{-1}
I_n(\epsilon){O_n}\tr \nonumber\\
=&\;O_nI_n(\epsilon){O_n}\tr \bigl((O_n\widehat{\mathbf{H}}_n\widehat{\mathbf{M}}_n^{-1/2})(O_n\widehat{\mathbf{H}}_n\widehat{\mathbf{M}}_n^{-1/2})\tr \bigr)^{-1}
O_nI_n(\epsilon){O_n}\tr \nonumber\\
=&\;\left(
\begin{array}{cc}
{\tilde{\Sigma}_{11}}^{-1}(n) & 0 \\
0 & 0 \\
\end{array}\right),
\end{align}
where
$\tilde{\Sigma}_{11}^{-1}(n)={\Sigma^{-1}_{11}}(n)
+{\Sigma^{-1}_{11}}(n)\Sigma_{12}(n){\Sigma^{-1}_{22.1}}(n)\Sigma_{21}(n){\Sigma^{-1}_{11}}(n)$
and
${\Sigma^{-1}_{22.1}}(n)=\Sigma_{22}(n)-\Sigma_{21}(n){\Sigma^{-1}_{11}}(n)$ $\Sigma_{12}(n).$
Let $M^{-}$ denote a general inverse matrix $M$. It follows that
\begin{align*}
&\;(\widehat\beta_n-\beta_0)\tr \bigl(I_n(\epsilon)((\widehat{\mathbf{H}}_n\widehat{\mathbf{M}}_n^{-1/2})(\widehat{\mathbf{H}}_n\widehat{\mathbf{M}}_n^{-1/2})\tr )^{-1}I_n(\epsilon)\bigr)^{-}(\widehat\beta_n-\beta_0)\nonumber\\
=&\;(\widehat\beta_{n1}-\beta_{01})\tr \tilde{\Sigma}_{11}(n)(\widehat\beta_{n1}-\beta_{01}),
\end{align*}
where $\beta_{01}$ is sub-vector of $\beta_0$ corresponding to $\widehat\beta_{n1}$.
Using Theorem~\ref{thm:ucip.rho}, it implies that as $t\rightarrow\infty$,
\begin{equation}\label{chi2}
  (\widehat\beta_{N1}-\beta_{01})\tr \tilde{\Sigma}_{11}(N)(\widehat\beta_{N1}-\beta_{01})\rightarrow{\chi}^2(p_0) \quad \text{in distribution.}
\end{equation}

\subsection{Sequential estimate with adaptive shrinkage estimate}

 Our goal now is to have  a sequential estimation procedure that can detect the effective variables and then construct a fixed-sized confidence set for them with a pre-specified accuracy.  Theorem~\ref{thm:ucip.rho} paves the way for models with stochastic regressors.
However,  due to the unknown true value of $p_0$, which should also be estimated using the observations,   we cannot use \eqref{chi2} directly.
Conventional sequential estimation focuses on the sample size required for the parameter estimates; here, we need to decide the effective variables sequentially based on the recruited observations as well.

Suppose that $C_k$ is a set of $k$ observations $\{(\mathbf{y}_i,\mathbf{X}_i):i=1,\ldots,k\}$ at the $k$th stage.
For an $\epsilon > 0$, let $\hat p_0(k)  =  \sum_{j=1}^p I_{kj}(\epsilon)$ based on $C_k$.
Because $I_{nj}(\epsilon)$ converges to $I(\beta_{0j}\neq 0)$ almost surely as $n\rightarrow\infty$, this implies that $\hat{p}_0 = \hat{p}_0(n)$ converges to $p_0$ almost surely; in addition,  we have that $\lim_{n\rightarrow \infty}\E\{\hat{p}_0(n)\} = p_0$ \citep[see also][]{wang2013sequential}.

Let $a_k^2\in R$ be a constant satisfying the conditional probability $\mathrm{pr}(\chi^2_{\hat p_0(k)}\leq a_k^2|C_k)=1-\alpha$ for a given $\alpha$.
For a given $d > 0$, define the stopping time $N_d$ as follows:
\begin{eqnarray}\label{eq:stoprule.noshrinkage}
  N=N_d= \inf \left\{k: k\geq n_0\  {\text{and}}\  \nu_k\leq \frac{d^2}{a_k^2}\right\},
\end{eqnarray}
where $\nu_k$ is the maximum eigenvalue of
$I_k(\epsilon)(\widehat{\mathbf{H}}_k\widehat{\mathbf{M}}_k^{-1}\widehat{\mathbf{H}}_k)^{-}I_k(\epsilon)$.
We then conduct a sequential estimation procedure, collecting one new observation at a time until the stopping criterion $N_d$ is satisfied.
If the inequality in \eqref{eq:stoprule.noshrinkage} is fulfilled, then using all the $N_d$ observations, we construct  a confidence ellipsoid for $\beta_0$:
\begin{align}\label{cset}
  R_N=\left\{\beta\in R^p: S_N\leq \frac{d^2}{\nu_N}\  {\mbox{and}}\  \beta_j=0\  {\mbox{for}}\  I_{Nj}(\epsilon)=0,1\leq j\leq p\right\},
\end{align}
where $S_N=(\beta_{N1}-\widehat\beta_{N1})\tr \tilde{\Sigma}_{11}(N)(\beta_{N1}-\widehat\beta_{N1})$.
It follows that the maximum axis of ${(\beta_{N1}-\widehat\beta_{N1})}\tr {\tilde\Sigma}_{11}(N){(\beta_{N1}-\widehat\beta_{N1})}={d^2}/{\nu_N}$, say $R_N^{\hat p_0}$, is $2d$.  Please note that this $R_N^{\hat p_0}$ is only for the  $\hat p_0$ detected effective variables, which are sequentially decided based on the observations.

The proposed method allows the sequential estimation procedure to focus on the effective variables so that the confidence ellipsoid $R_N^{\hat p_0}$ is a projection of $R_N$ onto the $\hat p_0$-dimensional space spanned by the axes with nonzero components of $\widehat\beta_N$.  This is the reason why the proposed method uses fewer observations to achieve the required properties compared to those used in procedures with no variable detection feature.
Theorem \ref{thm:shrinkage.seq} summarises these properties; its proof is in the Appendix.
\begin{theorem}\label{thm:shrinkage.seq}
Let $N$ be the stopping time as defined in \eqref{eq:stoprule.noshrinkage}. {\color{black} If the assumptions of Theorem~\ref{thm:ucip.rho} hold, then}
(i) $\lim_{d\rightarrow 0}{d^2{\color{black}\rho(N)}}/{a^2\nu}=1$ almost surely,
(ii) $\lim_{d\rightarrow 0}\mathrm{pr}(\beta_0\in R_N)=1-\alpha$,
(iii) $\lim_{d\rightarrow 0}\hat p_0(N)=p_0$ almost surely, and
(iv) $\lim_{d\rightarrow 0}\E\{\hat p_0(N)\}=p_0$,
where $\nu$ is the maximum eigenvalue of matrix {\color{black}$I_0\vSigma^{-1}I_0$}.
\end{theorem}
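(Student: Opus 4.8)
The plan is to follow the classical pattern of sequential fixed-width estimation (Chow--Robbins, Srivastava, Woodroofe), adapted to the GEE/MQLE setting with the adaptive shrinkage layer, using the machinery already assembled in the excerpt.

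First I would establish part (iii), i.e.\ $\hat p_0(N)\to p_0$ almost surely as $d\to 0$. Since the stopping boundary in \eqref{eq:stoprule.noshrinkage} forces $\nu_k\leq d^2/a_k^2$ with $a_k^2$ bounded away from $0$ (it is the $(1-\alpha)$-quantile of a $\chi^2$ with at most $p$ degrees of freedom), and $\nu_k$ is essentially of order $1/\rho(k)\to 0$ only as $k\to\infty$, the stopping time $N=N_d$ must diverge to infinity almost surely as $d\to 0$; the initial sample size $n_0$ and Condition (C6) together with $\rho(\cdot)$ increasing guarantee $N_d<\infty$ for each $d$. Once $N_d\to\infty$ a.s., the consistency statement $I_{nj}(\epsilon)\to I(\beta_{0j}\neq 0)$ a.s.\ (which was recorded in the text just before the statement of the theorem, and rests on the convergence rate \eqref{eq:con-rate} and the calibration \eqref{eq:kappa.rate} of $\kappa$) applied along the random subsequence $n=N_d$ yields $\hat p_0(N_d)=\sum_j I_{N_d j}(\epsilon)\to\sum_j I(\beta_{0j}\neq 0)=p_0$ a.s. Part (iv) then follows by bounded convergence, since $0\leq \hat p_0(N_d)\leq p$ is uniformly bounded.

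Next I would prove (i), the asymptotic efficiency of the stopping rule. By the definition of $N_d$ as a first-passage time and the fact that $\nu_k$ changes by a vanishing increment from stage to stage (this is where Lemma~\ref{mqle:ucip}, the u.c.i.p.\ property of $\widehat{\mathbf M}_n^{-1/2}\widehat{\mathbf H}_n(\widehat\beta_n-\beta_0)$, does the real work, together with (C6) which gives $\mathbf M_n^{-1/2}\mathbf H_n/\rho(n)^{1/2}\to\vSigma^{1/2}$), one gets the usual sandwich $d^2/a_N^2 \leq \nu_N \leq d^2/a_{N-1}^2$ up to asymptotically negligible terms. Combining this with $\nu_k\,\rho(k)\to\nu$ a.s.\ (the maximum eigenvalue of $I_k(\epsilon)(\widehat{\mathbf H}_k\widehat{\mathbf M}_k^{-1}\widehat{\mathbf H}_k)^{-}I_k(\epsilon)$ converges, after scaling by $\rho(k)$, to the maximum eigenvalue $\nu$ of $I_0\vSigma^{-1}I_0$, using (C5), (C6), $I_n(\epsilon)\to I_0$ a.s., and continuity of eigenvalues) and with $a_N^2\to a^2:=$ the $(1-\alpha)$-quantile of $\chi^2_{p_0}$ (which follows from (iii) and the discreteness of the degrees of freedom, so $a_N^2=a^2$ eventually a.s.), one obtains $d^2\rho(N)/(a^2\nu)\to 1$ a.s.

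For (ii), the coverage probability, I would start from the $\chi^2$ limit \eqref{chi2}, namely $(\widehat\beta_{N1}-\beta_{01})\tr\tilde\Sigma_{11}(N)(\widehat\beta_{N1}-\beta_{01})\to\chi^2(p_0)$ in distribution as $t\to\infty$ — but here the index is the random $N=N_d$, so I need an Anscombe-type argument: the u.c.i.p.\ property from Lemma~\ref{mqle:ucip} is precisely the equicontinuity condition that lets one replace the deterministic index by $N_d$ and retain the limiting distribution. Then $\mathrm{pr}(\beta_0\in R_N)=\mathrm{pr}\!\left(S_N\leq d^2/\nu_N\right)$ where $S_N=(\widehat\beta_{N1}-\beta_{01})\tr\tilde\Sigma_{11}(N)(\widehat\beta_{N1}-\beta_{01})$, and since by (i) $d^2/\nu_N=a^2\rho(N)(1+o(1))/\rho(N)\to a^2$ — more carefully, $d^2/\nu_N\to a^2$ in probability — the event becomes asymptotically $\{\chi^2(p_0)\leq a^2\}$, which has probability $1-\alpha$ by the choice of $a^2$. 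One has to be a little careful: the quadratic form lives on the \emph{estimated} set of $\hat p_0$ coordinates, not the true $p_0$, so I would condition on the a.s.\ event (valid for $d$ small, by (iii)) that the selected coordinate set coincides with the true support, on which $S_N$ and the true-support quadratic form agree.

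The main obstacle I anticipate is the coverage claim (ii): making the Anscombe random-index substitution rigorous requires that the u.c.i.p.\ statement of Lemma~\ref{mqle:ucip} be paired with the a.s.\ convergences $N_d\to\infty$, $I_{N_d}(\epsilon)\to I_0$, and $\nu_{N_d}\rho(N_d)\to\nu$, and one must check that the scaling matrices $\tilde\Sigma_{11}(N)$ (a Schur-complement-type object built from the random partition \eqref{eq.1}) converge to the right deterministic limit along the random subsequence despite the partition itself being random. Everything else is a fairly mechanical adaptation of the Chow--Robbins--Srivastava template, but this interaction between random model selection, random stopping, and the Anscombe condition is the delicate point, and it is exactly why Lemma~\ref{mqle:ucip} and Theorem~\ref{thm:ucip.rho} were set up in the form they were.
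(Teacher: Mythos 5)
Your proposal is correct and follows essentially the same route as the paper: part (i) via the Chow--Robbins first-passage argument combined with $\rho(k)\nu_k\to\nu$ a.s., part (ii) by reducing the coverage event to the random-index $\chi^2$ limit supplied by Theorem~\ref{thm:ucip.rho} together with $d^2\rho(N)/\nu_N\to a^2$, and parts (iii)--(iv) from $N_d\to\infty$ a.s., the almost sure convergence of the indicators $I_{nj}(\epsilon)$, and bounded convergence (the argument the paper delegates to Wang and Chang, 2013). You actually spell out several steps the paper leaves implicit, such as the sandwich bound on $\nu_N$ and the alignment of the selected support with the true support in (ii), but the underlying strategy is identical.
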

Theorem \ref{thm:shrinkage.seq} parts (i) and (ii)  were termed `asymptotic consistency and efficiency,' respectively \citep{chow1965asymptotic}, which mean that (i) the coverage probability of the proposed sequential procedure will converge to the prescribed one and (ii) the proposed sequential procedure is, asymptotically, as efficient as the (unknown) fixed sample procedure in terms of the ratio of the sample sizes.  Theorem \ref{thm:shrinkage.seq} parts (iii) and (iv) state that when the sequential estimation procedure is stopped, the estimate of the number of effective variables converges to the true number of effective variables in probability and its expectation will also converge to the true number of effective variables.

By \eqref{eq:stoprule.noshrinkage}, it is clear that the procedure requires a larger sample size for a smaller $d$. From Equation \eqref{cset}, we can see that the constant $d$ confines the maximum axis of the confidence ellipsoid, which specifies the precision of the final parameter estimates of a sequential procedure.   Together, these guarantee  that the proposed sequential procedure can use  a `minimum' required sample size such that the confidence set will approximately have the pre-specified coverage probability for the effective components of $\beta_0$.
(Note that to sample a batch of new observations at a time is also feasible with slight modifications of the current arguments.)

\begin{remark}
We can treat the conventional  sequential procedure, without  ASE, as a simplified case of the current procedure.
Hence, we also summarize the results for the fixed-sized confidence set estimation without the variable detection feature in {Supplementary materials}. 
\end{remark}

\subsection*{Modified Sequential D-criterion}
In the theorem above, the regularity conditions on the design are not for any specific design scheme.
Using the D-optimality criterion in statistical experimental design to select points will fulfill those conditions.
However, to find the new observations based on conventional D-optimality may be computational intensive.
Therefore,  we use a modified sequential D-criterion described below to select new observations adaptively and sequentially. 

{\color{black}
Let {$\mathcal{D}$} and  $\mathcal{U}$ be the recruited sample set and the inactive sample set, respectively.
The regular D-criterion is to choose a data point $\mathbf{X}^*$ from the data pool, which satisfies the following equation:
$$
  \mathbf{X}^*=\arg\max_{\mathbf{X}^*\in\mathcal{U}}\det\Bigl(\mathbf{H}_{(n+1)1}(\widehat{\beta}_n)\mathbf{M}_{(n+1)1}^{-1}(\widehat{\beta}_n)\mathbf{H}_{(n+1)1}(\widehat{\beta}_n)\Bigr).
$$
{\color{black} Because this procedure is usually computationally intensive},  we propose the following modified sequential D-criterion instead:
\begin{align*}
      \mathbf{X}^*&=\arg\max_{\mathbf{X}^*\in\mathcal{U}}\det\bigl(\mathbf{G}_{(n+1)1}\bigr)
      =\arg\max_{\mathbf{X}^*\in\mathcal{U}}\det\bigl(\mathbf{G}_{(n)1}+\mathbf{g}_{(n+1)1}\bigr),
\end{align*}
where $$\mathbf{G}_n=\sum_{i=1}^n\mathbf{g}_i=\sum_{i=1}^n \mathbf{X}_i\mathbf{A}_i(\widehat{\beta}_n)\widehat{\mathbf{V}}_i^{-1}(\widehat{\beta}_n)\mathbf{A}_i\tr(\widehat{\beta}_n)\mathbf{X}_i\tr=\sum_{i=1}^n \mathbf{X}_i\mathbf{A}_i^{1/2}(\widehat{\beta}_n)\bar{\mathbf{R}}^{-1}\mathbf{A}_i^{1/2}(\widehat{\beta}_n)\mathbf{X}_i\tr$$ and
$\bar{\mathbf{R}}=\frac 1 n \sum_{i=1}^n\mathbf{A}_i^{-1/2}(\widehat{\beta}_n)\{\mathbf{y}_i-\mathbf{h}_i(\widehat{\beta}_n)\}\{\mathbf{y}_i-\mathbf{h}_i(\widehat{\beta}_n)\}\tr\mathbf{A}_i^{-1/2}(\widehat{\beta}_n)$, as suggested in \cite{Balan2005}.
{\color{black}  (For a matrix $\mathbf{W}$, the notation $\mathbf{W}_{n}$ represents the  matrix arranged according to the detected effective variables at stage $n$ and $\mathbf{W}_{(n)1}$ denotes its sub-matrix.}
After $\mathbf{X}^*$ is selected, we delete {it} from $\mathcal{U}$ and add $(\mathbf{y}^*, \mathbf{X}^*)$ to {$\mathcal{D}$}, then update $\widehat\beta_n$ to $\widehat\beta_{n+1}$. Then, we continue this process until the stopping criterion \eqref{eq:stoprule.noshrinkage} is fulfilled.
}

\section{Numerical Studies}

\subsection{Simulated Data}
We first assess the proposed method with simulated data sets by comparing the average sample sizes (stopping times, $N^*$), empirical coverage probabilities (C.P.), and consistency of variable detection of the GEE-based procedures with ASE under random sampling (ASE-R), adaptive sampling with D-optimality criterion (ASE-D), and the procedure with all $p$ variables (GEE).  Additionally, we report the results using only the true $p_0$ variables (the Oracle) as a reference.  We also report the average numbers of incorrectly identified zero variables {\color{black}($\text{Num}_{ic}$)} and the average number of correctly identified zero variables {\color{black}($\text{Num}_c$)} for the performance in identifying effective variables.  The targeted coverage probability is equal to  $95\%$ and the empirical coverage probability should approach the nominal $95\%$ as $d$ decreases.   Let $\kappa=d^2N/(a^2_N\nu_N)$; then, as stated in Theorem~\ref{thm:shrinkage.seq}, we expect that $\kappa$ will be close to 1 as $d$ decreases.  We consider both continuous and discrete responses in our simulation study.  

\subsection*{Continuous Response Case.}
{\color{black}
The highlight of this example is about the ability to detect effective variables and we generate data with a linear link function for illustration purposes.
Let $\mathbf{X}_i=(\mathbf{x}_{i1}, \ldots, \mathbf{x}_{im})$ be the covariate matrices with $m=5$ and $p= 8, 10$, and $24$,
and for different $p$, the parameter $\beta_0=(1, -1.1, 1.5, -2, \mathbf{0}_k)\tr$ with $k=4, 8$, and $20$,  where
$\mathbf{0}_k$ denotes the $k$-dimensional vector of $0$s.
Thus, the ratio of the number of nonzero coefficients ($p_0$) to the number of zero coefficients ($p_k$) are  $4:4$, $4:8$ and $4:20$.
We generate this data set sequentially  so that successors depend on the information of their predecessors in the following adaptive  manner:
Let the first $\mathbf{X}_1$ be from a standard multivariate normal distribution with zero mean vector and identity covariance matrix. We
then adaptively generate $\mathbf{X}_n$s, $n >1$, from a multivariate normal distribution with mean vector $\sum_{i=1}^{n-1}\sum_{j=1}^{m} \mathbf{x}_{ij} /m(n-1)$ and identity covariance matrix.
We then generate the response  $\mathbf{y}_i$ as follows:
\begin{equation*}
  \mathbf{y}_i = \mathbf{X}\tr_i\beta_0 + \vare_i,
\end{equation*}
where the random error vector $\vare_i$ follows a normal distribution with mean $\mathbf{0}$ and three different covariance structures with corresponding dimensional numbers.  These three covariance matrices are the identity matrix, the exchangeable, and the \textsc{ar}(1) autoregressive correlation structure with autocorrelations $\alpha=0.3$ and $0.5$.
We only report the results of the {\textsc{ar}(1)} case with {$p_0:p_k = 4:20$} and present the rest of the results in Supplementary materials.  In this example, we set the parameters for ASE, as described in \eqref{eq:kappa.rate}, as follows:  $\gamma=1$, $\delta = 0.45$, and $\theta = 0.65$  and  use the \textsc{qic} \citep{pan2001QIC} as its information criterion.
}

We start with a  data pool of $1000$ data points each time and obtain an initial estimate of $\beta_0$ using a random sample of initial data of size $25$ from this data pool. Table \ref{tab:sim1.ar1.res.4vs20.con} summarizes the results based on 1000 replications with $d=0.2, 0.3$, and $0.5$.  The results show that the empirical coverage probabilities of ASE-D, ASE-R, and GEE are all less than that of the Oracle, whereas the sample sizes used in ASE-D and ASE-R are very close to that of the Oracle. In this example,  the estimate from the conventional GEE uses around 4 or 5 times larger sample sizes than those of the other methods.   The incorrectly identified zero variables ($\text{Num}_{ic}$) are all equal to 0. For ASE-D and ASE-R, the average number of correctly identified zero variables ($\text{Num}_c$) is close to the true number -- 20 and all GEE cases have $\text{Num}_c < 0.1$, which is reasonable, since conventional GEE do not consider the sparseness situation.  These results confirm that the proposed sequential procedure with ASE works well for the multiple correlated responses data in the current simulation setup.  Both ASE-D and ASE-R perform similarly here, because this model is simple.
The advantage of ASE-D will be clearly revealed in the discrete response case next.

\begin{table}
\centering
\caption{Results of sequential sampling method, performance in variable identification, and estimation of number of nonzero components for identity link function with \textsc{ar}(1) correlation when $p_0:p_k=4 : 20$.}\label{tab:sim1.ar1.res.4vs20.con}

\begin{tabular}{cclrcrrr}
\hline
    {$\alpha$} & {$d$} & Method  & \multicolumn{1}{c}{$N^*$}  & $C.P.$  & \multicolumn{1}{c}{$\kappa$}  & {$Num_{c}$} & {$Num_{ic}$} \\
    \hline
    0.3 & 0.4 & Oracle &34.498  (4.931)   & 0.887  & 1.079  & 0.000   & 0.000  \\
        &     & ASE-D  &36.452  (6.278)   & 0.669  & 1.126  & 19.714  & 0.000\\
        &     & ASE-R  &40.947  (11.430)  & 0.604  & 1.133  & 19.552  & 0.000\\
        &     & GEE    &156.375 (12.067)  & 0.690  & 1.009  & 0.033   & 0.000\\
        & 0.3 & Oracle &55.091  (9.457)   & 0.901  & 1.021  & 0.000   & 0.000  \\
        &     & ASE-D  &52.116  (10.537)  & 0.827  & 1.038  & 19.948  & 0.000\\
        &     & ASE-R  &63.010  (17.939)  & 0.779  & 1.078  & 19.834  & 0.000\\
        &     & GEE    &245.548 (16.103)  & 0.810  & 1.006  & 0.048   & 0.000\\
        & 0.2 & Oracle &118.769 (14.745)  & 0.928  & 1.008  & 0.000   & 0.000  \\
        &     & ASE-D  &113.290 (16.222)  & 0.904  & 1.010  & 19.997  & 0.000\\
        &     & ASE-R  &120.243 (16.195)  & 0.882  & 1.012  & 19.986  & 0.000\\
        &     & GEE    &488.373 (25.909)  & 0.892  & 1.002  & 0.066   & 0.000\\
    0.7 & 0.4 & Oracle &40.677  (8.140)   & 0.880  & 1.042  & 0.000   & 0.000  \\
        &     & ASE-D  &40.920  (8.357)   & 0.769  & 1.052  & 19.981  & 0.000\\
        &     & ASE-R  &41.707  (8.640)   & 0.781  & 1.051  & 19.950  & 0.000\\
        &     & GEE    &170.199 (18.057)  & 0.695  & 1.005  & 0.053   & 0.000\\
        & 0.3 & Oracle &76.714  (12.918)  & 0.918  & 1.013  & 0.000   & 0.000  \\
        &     & ASE-D  &76.442  (12.987)  & 0.862  & 1.015  & 20.000  & 0.000\\
        &     & ASE-R  &77.342  (13.187)  & 0.882  & 1.015  & 19.999  & 0.000\\
        &     & GEE    &303.726 (23.976)  & 0.830  & 1.003  & 0.072   & 0.000\\
        & 0.2 & Oracle &172.986 (18.759)  & 0.931  & 1.005  & 0.000   & 0.000  \\
        &     & ASE-D  &173.148 (19.087)  & 0.919  & 1.005  & 20.000  & 0.000\\
        &     & ASE-R  &173.864 (19.172)  & 0.917  & 1.005  & 20.000  & 0.000\\
        &     & GEE    &676.856 (36.565)  & 0.905  & 1.001  & 0.108   & 0.000\\
      \multicolumn{8}{p{20em}}{* Standard deviations are given in parentheses} \\
\hline
\end{tabular}%
\end{table}%

\subsection*{Discrete Responses Case}
{
In this example, we use a logistic model to illustrate discrete response situations.
We generate the covariates vector $\mathbf{x}_{ij}$ from a multivariate normal distribution with mean zero and an \textsc{ar}(1) correlation matrix with autocorrelation coefficient $0.5$ and marginal variance equal to $0.2$.
The binary response vector for each cluster has an \textsc{ar}(1) correlation structure with correlation coefficient $\alpha$ and the marginal expectation $\mu_{ij}$ satisfies the following equation:
\begin{equation*}
  \text{logit}(\mu_{ij})=\mathbf{x}_{ij}\tr\beta_0, \quad i=1,\ldots, 5000,\quad j=1,2,3,
\end{equation*}
We use the \texttt{R} program 
in \citet{Oman2009} to generate the correlated binary response data under several different setups:
two values of $\alpha = 0.3$ and $0.7$, and three different length regression coefficient vectors:
$\beta_0\tr=(0.6, -0.5, 0.4, \mathbf{0}_k\tr$) with $k=3, 6$ and $12$.

At the beginning, 5000 observations are generated as the data pool and we randomly choose $200$ from this pool as our initial set.
We consider two situations in this nonlinear case: (1) the structure of $\mathbf{R}(\alpha)$ 
is known and (2) the correlation structure $\mathbf{R}(\alpha)$ is mis-specified. 
In both situations, we estimate $\alpha$ based on the suggestion of \citet{liang1986longitudinal}.

\begin{table}
\centering
\caption{ Performance in variable identification and estimation of nonzero components under sequential sampling method based on ASE and GEE for a logit link function and \textsc{ar}(1) correlation between clusters when $p_0:p_k=3 : 12$.}\label{tab:sim2.ar3vs12.res.con}
\begin{tabular}{cclrcrrr}
\hline
    {$\alpha$} & {$d$} & Method  & \multicolumn{1}{c}{$N$}  & $C.P.$  & \multicolumn{1}{c}{$\kappa$}  & {$Num_{c}$} & {$Num_{ic}$} \\
    \hline
   0.3  & 0.5 & Oracle & 384.997  (41.880)  & 0.949 & 1.004 & 0.000  & 0.000 \\
        &     & ASE-D  & 297.593  (52.953)  & 0.873 & 1.022 & 11.675 & 0.434 \\
        &     & ASE-R  & 415.522  (114.443) & 0.858 & 1.016 & 11.618 & 0.342 \\
        &     & GEE    & 1752.497 (77.876)  & 0.937 & 1.001 & 0.014  & 0.000 \\
        & 0.4 & Oracle & 603.168  (56.069)  & 0.950 & 1.003 & 0.000  & 0.000 \\
        &     & ASE-D  & 381.430  (93.703)  & 0.923 & 1.014 & 11.872 & 0.402 \\
        &     & ASE-R  & 604.040  (167.815) & 0.908 & 1.010 & 11.873 & 0.279 \\
        &     & GEE    & 2680.709 (93.850)  & 0.938 & 1.001 & 0.015  & 0.000 \\
        & 0.3 & Oracle & 1070.389 (72.207)  & 0.956 & 1.002 & 0.000  & 0.000 \\
        &     & ASE-D  & 611.206  (175.545) & 0.943 & 1.007 & 11.973 & 0.331 \\
        &     & ASE-R  & 1090.787 (248.695) & 0.951 & 1.005 & 11.983 & 0.167 \\
        &     & GEE    & 4693.950 (123.752) & 0.950 & 1.000 & 0.021  & 0.000 \\
   0.7  & 0.5 & Oracle & 384.904  (43.847)  & 0.941 & 1.004 & 0.000  & 0.000 \\
        &     & ASE-D  & 296.432  (54.766)  & 0.902 & 1.020 & 11.660 & 0.454 \\
        &     & ASE-R  & 412.007  (108.277) & 0.855 & 1.017 & 11.615 & 0.345 \\
        &     & GEE    & 1748.791 (73.967)  & 0.948 & 1.001 & 0.012  & 0.000 \\
        & 0.4 & Oracle & 602.695  (54.798)  & 0.951 & 1.003 & 0.000  & 0.000 \\
        &     & ASE-D  & 383.120  (92.708)  & 0.916 & 1.014 & 11.879 & 0.383 \\
        &     & ASE-R  & 615.738  (162.033) & 0.903 & 1.010 & 11.861 & 0.265 \\
        &     & GEE    & 2685.861 (92.699)  & 0.955 & 1.001 & 0.015  & 0.000 \\
        & 0.3 & Oracle & 1071.102 (74.639)  & 0.953 & 1.002 & 0.000  & 0.000 \\
        &     & ASE-D  & 616.704  (173.345) & 0.921 & 1.007 & 11.969 & 0.314 \\
        &     & ASE-R  & 1087.811 (250.271) & 0.941 & 1.006 & 11.974 & 0.169 \\
        &     & GEE    & 4696.926 (129.885) & 0.961 & 1.000 & 0.019  & 0.000 \\
        \hline
    \end{tabular}%
\end{table}%

Table \ref{tab:sim2.ar3vs12.res.con} reports the results of \textsc{ar}(1) with $p_0:p_k=3 : 12$.
The GEE method using all $p$ variables has better coverage probabilities than the other two procedures at the cost of the sample size used.
In fact, GEE usually uses samples approximately 4 to 5 times larger than those used by the Oracle procedure in this study.
Both ASE-D and ASE-R have similar values of $\text{Num}_{c}$, which are close to the targeted $p_k= 12$, whereas the $\text{Num}_{ic}$ values are slightly higher than those in the previous case due to the nonlinear model situation.
In the current model, we clearly see that the sample sizes used in ASE-D and ASE-R are very different: ASE-D uses fewer observations by fully taking the advantage of the D-optimality criterion in selecting data points. The results for the other settings are in Supplementary materials.

}

\subsection{Real Data Examples}
\subsection*{Yeast Cell-Cycle Gene Expression Data Analysis} We apply the proposed MQLE-based sequential procedure with {\color{black}{ASE}} to the yeast cell-cycle gene expression data set collected in CDC15 \citep[][see also http://genome-www.stanford.edu/cellcycle/data/rawdata/]{Spellman1998}. In this experiment, genome-wide mRNA levels were recorded for 6178 yeast ORFs (an abbreviation for open reading frames, DNA sequences that can determine which amino acids will be encoded by a gene) at 7 minute intervals for 119 minutes, which covers two cell-cycle periods, for a total of 18 time points.
The cell-cycle is a tightly regulated life process, where cells grow, replicate their DNA, segregate their chromosomes, and divide into as many daughter cells as the environment allows, and this process is commonly divided into M/G1-G1-S-G2-M stages. The M stage stands for `mitosis', during which nuclear (chromosome separation) and cytoplasmic (cytokinesis) divisions occur. The G1 stage stands for `GAP 1';the S stage stands for `synthesis', during which DNA replication occurs; and, the G2 stage stands for `GAP 2'.

A sequence-specific DNA-binding factor, sometimes referred to as TF, is a protein that binds to specific DNA sequences, thereby controlling the flow (or transcription) of genetic information from DNA to mRNA.  In their experiment, \citet{Spellman1998}  identified approximately 800 genes that vary in a periodic fashion during the yeast cell-cycle.  The regulation of most of these genes was not clear; however, TFs have been observed to play critical roles in gene expression regulation \citep[see also][]{Simon2001}.
We apply the proposed method to identify the TFs that influence the gene expression level at each stage of the {\color{black} cell-cycle} process. This is essential for understanding how the cell-cycle is regulated and also how cell-cycles regulate other biological processes.

We analyze a subset of 297 cell-cycle-regularized genes {\color{black} obtained from the \texttt{R} package \texttt{PGEE} \citep{inan2017pgee}} as in \citet{Luan2003cluster} and  \citet{Wang2008repmeasurement}. The response variable $y_{ij}$ is the log-transformed gene expression level of gene $i$, measured at time point $j$; the covariate $\mathbf{x}_{ik}$, $k = 1, \ldots, 96$, is the matching score of the binding probability of the $k$th transcription factor on the promoter region of the $i$th gene.
{\color{black} We calculated the binding probability with a mixture modeling approach based on data from a ChIP binding experiment \citep[see][for further details]{Wang2007gSCAD} and, to compare the performance of ASE-D, ASE-R, and GEE, we removed 20 less correlated covariates due to the sample size available.} 

We apply the sequential estimation procedure to the  G1 stage (which contains a few time points from the cycle) of the cell-cycle process using the following model
\[
  y_{ij} = \alpha_0 + \alpha_1 t_{ij} + \sum_{k=1}^{60}\beta_k \mathbf{x}_{ik} + \epsilon_{ij},
\]
where $\mathbf{x}_{ik}, k=1,\ldots, 60$, is standardized to have mean zero and variance 1; $t_{ij}$ denotes time and  imposes shrinkage on the $\beta_k$'s.
Table \ref{tab:yeast} summarizes the number of TFs identified based on 1000 replications,  using three different working correlation structures for $\epsilon_{ij}$: independence, \textsc{ar}(1), and exchangeable{\color{black}.  In each replication, the initial regression coefficient estimates are based on $100$ random observations.}.
Our analysis reveals that at the G1 stage, the selected TFs, in terms of numbers and specific TFs, are not sensitive to the choice of the working correlation structure. Due to the space limitation, in Table \ref{tab:yeast}, we only report the average sample sizes, the average number of selected TFs and the average estimates of the correlation parameter. Some of the TFs selected have already been confirmed by biological experiments using the genome-wide binding method. For example, MBP1, SWI6, and SWI4 are three TFs that have been proved important in stage G1 in the aforementioned biological experiments and they have been selected by the proposed sequential procedure for stage G1.  We can see from Table \ref{tab:yeast} that ASE-D uses fewer observations ASE-R based on their averages; in fact, the standard deviation of the sample sizes used by ASE-D is also smaller than that of the sample sizes used by ASE-R.  {\color{black} The numerical results here clearly show that the proposed method is beneficial and the sequential procedure with D-optimality for adaptive data recruiting is promising.}

\begin{table}
\centering
\caption{Sample size, number of TFs selected, and estimates of the correlation parameter in the yeast cell-cycle process with ASE-R and ASE-D.}
    \begin{tabular}{ccccc}
    \hline
    Correlation  & Method & $N$        & $\hat{p}_0$ & $\hat{\alpha}$ \\
    \hline
    \textsc{ar}(1)         & ASE-R  & 240(26.20) & 22.7(5.60)      & 0.498(0.021) \\
                 & ASE-D  & 133 (8.37)  & 21.0(5.96)      & 0.379(0.045) \\
    Exchangeable & ASE-R  & 240(26.20) & 21.3(5.63)      & 0.339(0.023) \\
                 & ASE-D  & 133 (7.88)  & 20.2(5.63)      & 0.215(0.050)\\
    Independence & ASE-R  & 231(26.80)  & 15.0(3.31)      & -- \\
                 & ASE-D  & 129 (6.52)  & 17.1(4.32)      & -- \\
    \multicolumn{5}{l}{*  Empirical standard deviations are given in parentheses}
    \end{tabular}%
  \label{tab:yeast}%
\end{table}%

\subsection*{Multiple Sclerosis Data Analysis}

This is a longitudinal clinical trial data set for assessing the effects of neutralising antibodies on interferon beta-1b (IFNB) in relapsing–remitting multiple sclerosis (MS), a disease that destroys the myelin sheath surrounding the nerves.
This data set is from a Magnetic Resonance Imaging (MRI) sub-study of the Betaseron clinical trial conducted at the University of British Columbia in relapsing–remitting multiple sclerosis, which involved 50 patients and each one visited the university every six weeks.
The patients were randomly allocated  into three treatment groups: 17 patients treated by placebo, 17 by a low dose, and 16 by a high dose.
This data set  was previously analyzed in \citet{Petkau2003,petkau2004} and included in a book by \citet{song2007}.

Exacerbation is used as the binary response variable, which indicates whether an exacerbation appeared since the previous MRI scan---1 for `yes' and 0 for `no'.
Seven explanatory variables were recorded: Treatment (Trt), Time (T) in weeks, Squared time (T$^2$), Age, Gender, Duration of disease (Dur) in years, and an additional baseline covariate---initial EDSS (Expanded Disability Status Scale) scores.
 Due to the results in \citet{song2007} and \citet{LI2013csda} and the available sample size,  we also delete Age from the original data as in the analysis conducted in \citet{LI2013csda}.  {\color{black} We recode Ltrt and Htrt as
\[
\text{Ltrt} =
\begin{cases}
  1, & \text{Low Dose}\\
  0, & \text{Otherwise},
\end{cases}
\qquad
\text{Htrt} =
\begin{cases}
  1, & \text{High Dose}\\
  0, & \text{Otherwise}.
\end{cases}
\]
}
and consider the marginal logistic model
\[
\text{logit}(\mu_{ij})=\beta_1\text{T}_j + \beta_2\text{T}_j^2 + \beta_3\text{Gender}_i+
\beta_4\text{Dur}_i + \beta_5\text{EDSS}_i + \beta_6\text{Ltrt}_i + \beta_7\text{Htrt}_i,
\]
where $\mu_{ij}$ is the probability of exacerbation at visit $j$ for subject $i$.  {\color{black} For illustration, we use three correlation structures: \textsc{ar}(1), exchangeable, and independence.}
Table \ref{tab:ms} reports the average sample size ($N$), the number {\color{black} of simulation runs} that use the entire sample ($N^{+}$), the average number of variables selected ($\hat{p}_0$) and the estimate of $\alpha$ ($\widehat\alpha$) based on 1000 runs. {\color{black} We start with an initial estimate of regression coefficients using $25$ randomly selected observations in each run.}
In Table \ref{tab:ms.beta}, for each variable, we also summarize how many times the corresponding coefficient shrinks to zero during the 1000 runs.
Our numerical results show that ASE-D tends to use fewer observations and can complete the estimation procedure with the available sample sizes.
{\color{black} Although the differences may not be statistically significant, they clearly show that ASE-D tends to select fewer variables and this information is usually beneficial for practitioners to design future studies.}

\begin{table}
\centering
\caption{Sample size, number of simulation runs for which the entire sample has been used, number of variables selected, and correlation parameter estimates in the multiple sclerosis data process with ASE-R and ASE-D.}
    \begin{tabular}{cccccc}
    \hline
    Correlation & Method  & $N$ & $N^{+}$ &$\hat{p}_0$ & $\hat{\alpha}$ \\
    \hline
    \textsc{ar}(1)         & ASE-R & 40.8(9.00) & 310 & 3.31(0.493) & -0.080(0.013) \\
                 & ASE-D & 36.2(6.72) & 68  & 3.07(0.253) & -0.072(0.012) \\
    Exchangeable & ASE-R & 42.0(9.67) & 368 & 3.38(0.489) & 0.018(0.013) \\
                 & ASE-D & 37.0(7.50) & 110 & 3.12(0.325) & 0.026(0.014) \\
    Independence & ASE-R & 41.6(9.11) & 360 & 3.55(0.512) & -- \\
                 & ASE-D & 36.2(6.87) & 89  & 3.09(0.290) & -- \\

    \multicolumn{6}{l}{*  Empirical standard deviations are given in parentheses}\\
    \hline
    \end{tabular}%
  \label{tab:ms}%
\end{table}%

\begin{table}
\centering
  \caption{Number of coefficients that shrunk to zero over 1000 simulations of each variable.}
    \begin{tabular}{lrrrrrr}
    \hline
\multicolumn{1}{l}{
{Coefficients}}
& \multicolumn{2}{c}{\textsc{ar}(1)} & \multicolumn{2}{c}{Exchangeable} & \multicolumn{2}{c}{Independence} \\
& \multicolumn{1}{c}{ASE-R} & \multicolumn{1}{c}{ASE-D}
& \multicolumn{1}{c}{ASE-R} & \multicolumn{1}{c}{ASE-D}
& \multicolumn{1}{c}{ASE-R} & \multicolumn{1}{c}{ASE-D}\\
        \hline 
T
    & 0 & 0
    & 0 & 0
    & 0 & 0 \\
T$^2$
    & 994 & 998
    & 994 & 997
    & 992 & 996\\
Gender
    & 571 & 900
    & 514 & 845
    & 539 & 867 \\
Durr
    & 1000 & 1000
    & 1000 & 1000
    & 999 & 999\\
EDSS
    & 2 & 2
    & 4 & 3
    & 4 & 1\\
Ltrt
    & 1000   & 923
    & 951    & 994
    & 943    & 997 \\
Htrt
    & 230 & 57
    & 166 & 61
    & 188 & 60 \\
    \hline
    \end{tabular}%
  \label{tab:ms.beta}%
\end{table}%

\section{Discussion}
The collection of correlated or highly stratified response data is common due to modern methods of data collection and to the monitoring methods and facilities.
Data set sizes are larger than before, and, hence, using an entire data set at once may not be convenient when our computational power and skills are not adequate.  Selecting a random subset of such a large data set is a common and easy solution; however, it is well-known that this method usually suffers from sampling variations and cannot always provide a stable and consistent result.  Moreover, to decide the size of such a random subset has never been an easy task when complicated models are involved.
Thus,  how to efficiently use the effective observations is an important issue and the idea of a sequential estimation method seem appropriate for our needs. We propose a method providing sequential stochastic MQLEs so that we can adaptively select effective observations and adopt the ASE method to detect the effective variables simultaneously.
The proposed method recruits new observations from a data pool into the analysis until  a pre-specified stopping criterion is fulfilled.  The asymptotic properties of the sequential procedure and numerical studies using both simulation data and real data examples show that the proposed method can perform well and appears to be a promising method for practical uses.  Our method is not limited to selecting one new observation at a time.  Selecting a batch of observations at each stage is possible with only slight modifications.

\section*{Acknowledgment}
A part of this research was supported by the Ministry of Science and Technology, Taiwan (MOST 106-2118-M-001-007-MY2). The first author also gratefully acknowledges the financial support from the China Scholarship Council(Grant No. 201706340175).

\section*{Supplementary materials}
We state the results of the sequential procedure with no ASE feature in  Supplementary materials, together with some extra simulation results, including the cases of the logit link with the ASE feature and different correlation structures. Readers interested in the results under different setups may refer to Supplementary materials.

\appendix
\section*{Appendix}

\subsection*{Technical details}
\begin{proof}[Proof of Lemma \ref{mqle:ucip}]
Let $\mathbf{S}_n(\beta_0)=\sum_{i=1}^n\mathbf{X}_i\mathbf{A}_i\mathbf{V}_i^{-1}\mathbf{e}_i$.
Because $\mathbf{X}_i$ is $\mathcal{F}_{i-1}$ measurable,  $\{\mathbf{e}_i:i \geq 1\}$ is a sequence of martingale differences with respect to $\{ \mathcal{F}_i, i \geq 1\}$, and \(\mathbf{S}_n(\beta_0)\) is a sum of martingale differences.
By Taylor series expansion, there exists $\bar\beta_n$, which lies between $\widehat\beta$ and $\beta_0$, such that
\begin{align*}
  \mathbf{M}_n^{-1/2}\mathbf{H}_n(\widehat\beta_n-\beta_0) = &\; \mathbf{M}_n^{-1/2}\mathbf{H}_n^{1/2}\bigl\{\mathbf{H}_n^{-1/2}\mathcal{D}_n(\bar\beta_n) \mathbf{H}_n^{-1/2}\bigr\}^{-1}\\
  & \times \bigl\{ \mathbf{H}_n^{-1/2}\mathbf{M}_n^{1/2}\bigr\}^{-1}\mathbf{M}_n^{-1/2}\mathbf{S}_n(\beta_0),
\end{align*}
where $\mathbf{S}_n(\beta_0)=\mathcal{D}_n(\bar\beta_n)(\widehat\beta_n-\beta_0)$. \citet{yin2006asymptotic} showed that $\mathbf{H}_n^{-1/2}\mathcal{D}_n(\bar\beta_n) \mathbf{H}_n^{-1/2} \rightarrow I_p$ almost surely under conditions (C1)--(C3). Note that each element of $\mathbf{M}_n^{-1/2}\mathbf{H}_n^{-1/2}$ and its inverse matrix are bounded. Hence, by Lemma 1.4 of \citet{Woodroofe1982Nonlinear}, to show $\{ \mathbf{M}_n^{-1/2}\mathbf{H}_n(\widehat{\beta}_n-\beta_0): n \geq 1\}$ is u.c.i.p., it suffices to show that $\{ \mathbf{S}_n(\beta_0): n\geq 1\}$ is u.c.i.p.

By the assumption that $\sup_{i\geq1} \|\mathbf{X}_i\| < \infty$ almost surely, all $\mathbf X_i\tr\beta_0$ fall in a compact set $T$ of $\mathbf{R}^q$ with probability one.  A result of \citet[][ Lemma 1]{yin2006asymptotic} implies that for each $i \geq 1$ and $t \in T$, $c_1I_q<\mathbf{A}_i(t)< c_2I_q$ and
$c_3I_q<\mathbf{V}_i^{-1}(t)< c_4I_q$ with probability one. Let $\lambda_{\max}(\mathbf{T})(\lambda_{\min}(\mathbf{T}))$ denote the largest(smallest) eigenvalue of the matrix $\mathbf{T}$.
Then,
\begin{align*}
 \left\|\mathbf{X}_i\mathbf{A}_i\mathbf{V}_i^{-1}\mathbf{e}_i\right\|^2
 &= \mathrm{tr}\left(\mathbf{X}_i\mathbf{A}_i\mathbf{V}_i^{-1}\Cov(\mathbf{y}_i\mid\mathcal{F}_{i-1})\mathbf{V}_i^{-1}\mathbf{A}_i\tr\ \mathbf{X}_i\right)\\
 &\leq \lambda_{\max}\left(\mathbf{X}_i\mathbf{A}_i\mathbf{V}_i^{-1}\Cov(\mathbf{y}_i\mid\mathcal{F}_{i-1})\mathbf{V}_i^{-1}\mathbf{A}_i\tr\ \mathbf{X}_i\right)
 \leq c.
\end{align*}
Let $U_n=\mathbf{S}_n(\beta_0)$ and $b_n=1$ in the H\'{a}jek-R\'{e}nyi inequality \cite[see][Theorem 7.4.8 (iii)]{chow1988probability}.  Following the arguments of Example 1.8 in \citet{Woodroofe1982Nonlinear} and replacing the Kolmogorov's inequality there with the H\'{a}jek-R\'{e}nyi inequality for martingale differences, we have, for $\epsilon, \delta > 0$ and $k \leq n\delta$,
\begin{align*}
  \mathrm{pr}\left\{\max_{k \leq n\delta} |\mathbf{S}_{n+k}-\mathbf{S}_n| \geq \epsilon\sqrt{n}\right\} \leq  \left(\frac{c}{n\epsilon^2}\right)n\delta=\frac{c\delta}{\epsilon^2}.
\end{align*}
Because ${c\delta}/{\epsilon^2}$ is independent of $n$ and goes to zero as $\delta\rightarrow 0$, this implies that $\{ \mathbf{S}_n(\beta_0), n\geq 1\}$ is u.c.i.p.
\end{proof}

\begin{proof}[Proof of Theorem~\ref{thm:ucip.rho}]
%
Let us rewrite
  \begin{align*}
    \sqrt{\rho(n)}(\widehat \beta _n-\beta_0)&=\sqrt{\rho(n)}I_n(\epsilon)(\tilde\beta_n-\beta_0)+\sqrt{\rho(n)}(I_n(\epsilon)-I_0)\beta_0
    \equiv \Delta_1(n)+\Delta_2(n).
  \end{align*}
Because $I_n(\epsilon)$ converges to $I_0$ almost surely, this implies that $\Delta_2(n)$ converges almost surely to $0$
  as $n$ goes to infinity.  It has been shown that  $\mathbf{M}_n^{-1/2}\mathbf{H}_n(\tilde\beta_n-\beta_0) \rightarrow N(0, I_p)$ in distribution \citep{yin2006asymptotic}. Then, by Slusky's theorem and (C6),
$\sqrt\rho(n)(\widehat\beta_{n}-\beta_0)\rightarrow N(0,I_0\vSigma^{-1}I_0)$ in distribution. Because  $I_n(\epsilon)$ is bounded, to replace the sample size with a random variable it is sufficient to show that  $\{\sqrt{\rho(n)}(\tilde \beta_n-\beta_0)\}$ is u.c.i.p. \citep[see][]{anscombe1951large}.
  \begin{align}\label{eq:lem1.1}
    \sqrt{\rho(n)}(\tilde \beta_n-\beta_0)=&\left\{\left(\frac{\mathbf{M}_n^{-1/2}\mathbf{H}_n}{\sqrt{\rho(n)}}\right)^{-1}-\vSigma^{-1/2}\right\}\sqrt{\rho(n)}\mathbf{M}_n^{-1/2}\mathbf{H}_n(\tilde \beta_n-\beta_0)\nonumber\\
    &+\sqrt{\rho(n)}\vSigma^{-1/2}\mathbf{M}_n^{-1/2}\mathbf{H}_n(\tilde \beta_n-\beta_0).
  \end{align}

Because $\sqrt{\rho(n)}\mathbf{M}_n^{-1/2}\mathbf{H}_n(\tilde \beta_n-\beta_0)$ is stochastic bounded, it follows from condition (C6) that  the first term of \eqref{eq:lem1.1} goes  to $0$ almost surely as $n$ goes to infinity. Using arguments similar to those in  Lemma~\ref{mqle:ucip}, the sequence $\{\sqrt{\rho(n)}\vSigma^{-1/2}\mathbf{M}_n^{-1/2}\mathbf{H}_n(\tilde \beta_n-\beta_0)\}$ is u.c.i.p, and the remain proof, which is omitted here, follows from the results of~\citet{anscombe1951large} (see also~\cite{Woodroofe1982Nonlinear}).
\end{proof}

\begin{proof}[Proof of Theorem~\ref{thm:shrinkage.seq}]
{
We know that with probability one, \\
$\rho(k)I_k(\epsilon)(\widehat{\mathbf{H}}_k\widehat{\mathbf{M}}_k^{-1}\widehat{\mathbf{H}}_k)^{-}I_k(\epsilon)$ converges to $I_0\vSigma^{-1}I_0$ and $\rho(k)\nu_k$ converges to $\nu$  as $k \rightarrow \infty$.
Let $y_k={(\rho(k)\nu_k)}/{\nu}$, $f(k)={(\rho(k)a^2)}/{a_n^2}$ and $t = {(\nu a^2)}/{d^2}$.
Then, \eqref{eq:stoprule.noshrinkage} becomes
$
N=\min\{k: k\geq n\ \text{and}\ y_k\leq f(k)/t\}.
$
Hence,   $1=\lim_{t\rightarrow \infty}{f(N)}/{t}=\lim_{d\rightarrow
    0}{(d^2\rho(N)}){(a^2\nu)}$ {almost surely.}
This implies that,  as $t \rightarrow \infty$, ${d^2\rho(N)}/{\nu_N} \rightarrow a^2$ and $\rho(N)/t\rightarrow 1$ almost surely.
Hence, (i) follows from \citet[][Lemma 1]{chow1965asymptotic}.

Let $\mathcal M_n=\widehat{\mathbf{H}}_n\widehat{\mathbf{M}}_n^{-1}\widehat{\mathbf{H}}_n$.
Hence, when $t$ goes to infinity, the event $\{\mathbf{\beta}_0 \in R_N\}$  is equivalent to 
$$\biggl\{\rho(N)(\widehat\beta_N-\beta_0)\tr \{I_N(\epsilon)   {\mathcal M_N}^{-1}I_N(\epsilon)\}^{-}(\widehat\beta_N-\beta_0)\leq \frac{d^2\rho(N)}{\nu_N}
\text{ and } \beta_j = 0 \text{ for } I_{Nj}(\epsilon) = 0
\biggr\}$$
Thus, Theorem~\ref{thm:shrinkage.seq} (ii) follows from Theorem~\ref{thm:ucip.rho}.
For the proofs of (iii) and (iv), we can  use arguments similar to those  of \citet[][Theorem 4]{wang2013sequential}, so we are omitting them here.
}

\end{proof}


\bibliographystyle{imsart-nameyear}


\end{document}